\newtheoremstyle{mytheoremstyle}
  {\topsep} 
  {\topsep} 
  {\itshape} 
  {} 
  {\bfseries} 
  {.} 
  {.5em} 
  {} 
\theoremstyle{mytheoremstyle}
\newtheorem{proposition}{Proposition}
\newtheorem{lemma}{Lemma}
\newtheorem{theorem}{Theorem}
\def\BibTeX{{\rm B\kern-.05em{\sc i\kern-.025em b}\kern-.08em
    T\kern-.1667em\lower.7ex\hbox{E}\kern-.125emX}}
\begin{document}
\title{Discrete Time  Credit-Based Shaping for Time-Sensitive Applications in  5G/6G Networks}
\author{Anudeep Karnam, Kishor C. Joshi,  Jobish John, George Exarchakos, Sonia Heemstra de Groot, Ignas Niemegeers\\
Center for Wireless Technology Eindhoven, Eindhoven University of Technology, The Netherlands.}
\maketitle
\begin{abstract}
Credit-Based Shaping (CBS) is one of the key mechanisms used in Time-Sensitive Networking (TSN) to provide bounded latency for prioritized traffic on wired Ethernet. However, a direct port of CBS to 5G/6G New Radio (NR) is not viable because of slotted time, centralized grants, and Modulation and Coding Scheme (MCS)-dependent rates. We introduce a slot-native, per-UE credit gate inside the gNB Medium Access Control (MAC) that restores TSN-like latency isolation over the air. Our first design, \emph{CBS-DT}, updates credit once per slot and debits each new grant by its Transport Block Size~(TBS). To avoid over-penalizing partially used grants (padding, burstiness), \emph{CBS-PU} debits only delivered bytes. For both variants we establish bounded credit and deterministic deficit recovery; with Round-Robin (RR) over the set of eligible UEs and at most \(K\) new grants per slot we derive hard bounds on (i) time to regain eligibility, (ii) time to first service once eligible, and (iii) inter-grant separation. We further develop an event-driven realization that replaces per-slot \(\mathcal{O}(U)\) scans with wake-ups, achieving total work \(\mathcal{O}(N{+}G)+\mathcal{O}((A{+}G)\log U)\) over \(N\) slots, where \(U\) is the UE count and \(A,G\) are queue-activation and new-grant events, so cost tracks traffic rather than population. In a 3GPP-conformant MATLAB 5G Toolbox evaluation, both variants preserve strict class ordering under heavy load and tighten latency tails versus proportional-fair (PF) and weighted-PF (WPF) schedulers. At reduced load, CBS-PU eliminates padding over-debit and attains \(>\!98\%\) grant utilization, outperforming the base RR policy while preserving isolation. The credit gate requires no changes to Hybrid Automatic Repeat reQuest (HARQ) or air-interface signaling, offering a practical path to deterministic 5G/6G networks.
\end{abstract}
\begin{IEEEkeywords}
5G NR, 6G, TSN, credit-based shaping, URLLC, MAC scheduling, bounded latency, determinism.
\end{IEEEkeywords}
\section{Introduction}
\label{sec_introduction}
Beyond-5G (B5G) and 6G networks are expected to support a wide range of quality-of-service (QoS) requirements~\cite{6G_vision}. Emerging Industry~$4.0/5.0$ use cases such as smart factories and real-time motion control require end-to-end latencies below $1~\text{ms}$ with microsecond-level jitter, and reliabilities exceeding $99.999\%$~\cite{Industry5O}. On wired links, the IEEE Time-Sensitive Networking (TSN) toolset provides such determinism by combining per-flow shaping with time-aware queuing to yield hard delay bounds~\cite{Sasiain}. Achieving comparable guarantees over the stochastic wireless medium is challenging due to time-varying channel quality, grant-based scheduling, and retransmissions~\cite{Sharma}. A natural approach is to adapt TSN primitives to 5G/6G wireless networks, notably the IEEE 802.1Qav~\cite{standard} credit-based shaping (CBS), which achieves per-class bounded delay when the aggregate reserved rate is strictly below the link capacity~\cite{fi16090345}.  This will extend predictable, bounded latency to high-priority traffic,  while preventing starvation of lower-priority traffic.

 TSN-over-5G deployments introduce device-side and network-side TSN translators (DS-TT/NW-TT) that map IEEE~802.1Q semantics onto 5G QoS identifier  (5QI)~\cite{Nikhileswar,Atiq}.   Complementary efforts build TSN–5G testbeds and virtualized environments~\cite{garbugli1,garbugli12}. These systems enable end-to-end TSN connectivity but treat the 5G cell as a transparent bridge~\cite{9212141} and leave the NR medium access control (MAC) scheduler unmodified. Closer to the scheduler, configured-grant (CG) approaches coordinate 5G grants with TSN flow periodicity to increase the fraction of flows meeting latency targets~\cite{LARRANAGAZUMETA2024107930}, coupling TSN knowledge to grant timing. Wireless TSN surveys flag \emph{slot-level scheduling under variable MCS} as an open problem~\cite{Zanbouri_2025}. Within ultra-reliable low-latency communications (URLLC), mechanisms such as grant-free access, mini-slots, and HARQ tuning improve latency but can starve low-priority flows when critical flows are over-provisioned~\cite{URRLCsurvey2025,Maghsoudnia}. Even with DS-TT/NW-TT that map Ethernet priorities into 5QI, the gNB MAC continues to run the standard, non-credit-aware grant scheduler~\cite{Satka}. As a result, CBS-derived determinism can be lost inside the NR air interface, where bursts accumulate and slot-level pacing is not enforced. What is missing is a \emph{slot-native} credit mechanism inside the gNB that debits credit whenever a new transmission grant is issued. We address this gap with a credit gate embedded in the gNB: per-UE credit evolves in discrete time, debits on each issued grant, and can be tied either to granted transport block size (TBS) or to the actually delivered bytes (to avoid over-debiting under partial usage). This enables TSN-like latency isolation in NR while preserving spectral efficiency. Our main contributions can be summarized as follows:
\begin{enumerate}[leftmargin=*]
\item \textit{Discrete time CBS (CBS-DT):}
We embed a per-UE credit gate in the gNB MAC and update credit on slot boundaries, debiting each \emph{new} transmission by the granted TBS. This aligns TSN credit semantics with NR’s slotted, MCS-dependent service (Section~\ref{sec_cbsstd}). We establish per-UE \emph{bounded credit} via clamping and \emph{deterministic recovery} from any deficit with a finite recovery time  (Section~\ref{sec:complexity}).
\item \textit{Partial-usage debit (CBS-PU):}
To avoid over-penalizing partially used grants, we debit by \emph{delivered} bytes. We prove \emph{credit dominance}, \emph{rate preservation}, and \emph{no-worse delay} versus CBS-DT (Theorem~\ref{thm:pu_properties}).
\item \textit{Deterministic pacing and delay bounds:}
Under at most $K$ new grants per slot and round-robin (RR) policy over the set of eligible UEs for transmission, we derive hard bounds on: (i) time to eligibility from any deficit; (ii) time to first service once eligible;  and (iii) inter-grant separation (Lemma~\ref{lem:absolute-wait}).
\item \textit{Event-driven implementation and scalability:}
We replace per-slot $\mathcal{O}(U)$ scans with wake-up scheduling. Over $N$ slots the total work is $\mathcal{O}\!\big(N+G\big)+\mathcal{O}\!\big((A{+}G)\log U\big)$, where $U$ is the total number of UEs, and $A$ and $G$ are queue-activation and new-grant events. When event rates do not grow with $U$, the scheduler’s cost \emph{tracks traffic, not population}.
\item \textit{3GPP-conformant evaluation:}
Using MATLAB \emph{5G Toolbox} (TS~38.211/38.212/38.214/38.322), we compare CBS-DT/CBS-PU (with RR over eligible UEs) to proportional fair (PF) and weighted-PF (WPF) in a single-cell NR downlink (Section~\ref{sec_results}). Both CBS variants preserve class ordering and tighten latency tails.
\end{enumerate}
\noindent \subsubsection*{Organization}
Section~\ref{sec:background_cbs} reviews TSN CBS. Section~\ref{sec:system_model} gives the NR system model. Sections~\ref{sec_cbsstd} and \ref{sec_cbspu} develop CBS-DT and CBS-PU and their properties. Section~\ref{sec:complexity} presents the event-driven algorithm and complexity. Section~\ref{sec_results} reports evaluation results; Section~\ref{sec_conclusion} concludes.
\section{Background: Credit-Based Shaping in Time-Sensitive Networking}
\label{sec:background_cbs}
The CBS~\cite{standard, Bril} offers per-class bandwidth reservation and contributes to predictable latency by  ensuring bounded queuing delays on shared links, working in conjunction with other mechanisms like the time-aware shaper (TAS) for tight delay bounds in Ethernet. Each prioritized traffic class is assigned a dedicated queue, indexed by $p$. Each queue maintains a continuous-time credit, $C_p(t)$, and packets are transmitted only when $C_p(t) {\ge} 0$~\cite{Hassani}. Among queues that are both backlogged (i.e., have data to send) and meet this eligibility condition of having non-negative credit, CBS strictly prioritizes the highest priority queue for transmission, allowing preferential treatment and deterministic latency control for critical traffic. 
\subsection{Credit Dynamics: \texttt{IdleSlope} and \texttt{SendSlope}}
The credit $C_p(t)$ continuously  evolves over $t$ based on two CBS parameters and the queue's operational state:
\begin{enumerate}
    \item \textbf{\texttt{idleSlope} ($\alpha_p^+$):} The  rate (bits/s or bytes/s) at which credit \textit{accumulates} when  queue $p$ is eligible but not transmitting (due to being empty or blocked by higher priority). $\alpha_p^+$ represents the bandwidth share \emph{reserved} for  traffic class $p$ over time. 
    \item \textbf{\texttt{sendSlope} ($\alpha_p^-$):} The rate at which credit is \textit{consumed} while queue $p$ is transmitting. It is formally defined as: $ \alpha_p^- = \alpha_p^+ - R_{\text{link}}$, where $R_{\text{link}}$ is the physical link rate, making $\alpha_p^-$ negative, reflecting credit consumption during transmission. 
\end{enumerate}
If queue $p$ is idle (either no pending load and $C_p(t) \ge 0$, or $C_p(t) < 0$ and not transmitting, or pending load but blocked by other traffic), its credit increases. For any two  time instances $t_1, t_2$ and $t_2{>}t_1$ the credit updates as~\cite{Jingyue}:
    \begin{equation}
        C_p(t_2) = C_p(t_1) + \alpha_p^+ \cdot (t_2 - t_1) 
        \label{eq:credit_increase}
    \end{equation}
If queue $p$ is actively transmitting:
    \begin{equation}
        C_p(t_2) = C_p(t_1) - \alpha_p^- \cdot (t_2 - t_1) 
        \label{eq:credit_decrease}
    \end{equation}
The credit is reset to zero if the queue becomes empty and $C_p(t) \ge 0$ (Eq. 3.18 in~\cite{Jingyue}).
\subsection{Credit Limits: \texttt{hiCredit} and \texttt{loCredit}}
To prevent unbounded credit accumulation or deficit, CBS employs upper and lower bounds on $C_p(t)$:
\begin{enumerate}
    \item \textbf{\texttt{hiCredit} ($hi_p$):} The maximum positive credit value queue $p$ is allowed to accumulate. $hi_p$ limits the maximum burst size  that the queue $p$ can transmit immediately following an idle period. 
    \item \textbf{\texttt{loCredit} ($lo_p$):} The minimum credit value queue $p$ can reach, ensuring that even after transmitting a maximum-sized frame, the queue's credit deficit is bounded. This, in turn, bounds the maximum time required for the queue to recover positive credit solely through the accumulation via \texttt{idleSlope}. The credit  is  clamped within $[lo_p, hi_p]$. 
\end{enumerate}
Fig.~\ref{Fig_cred_evol} illustrates CBS with two classes.  
A frame starts only when its credit is non-negative and it has frames to transmit; \(p_1\) (high priority) is chosen over \(p_2\) if both qualify.  Credit accumulates  at the reserved rate \(\alpha_p^{+}\) when idle and is consumed 
at \(\alpha_p^{-}\) during transmission and are clamped at $[lo_{p1},  hi_{p1}]$ and $[lo_{p2}, hi_{p2}]$ respectively. Therefore, consumption and accumulation of credits prevent a single high priority queue from monopolizing the link indefinitely. 
\begin{figure}[t]
\centering
\includegraphics[scale=0.25]{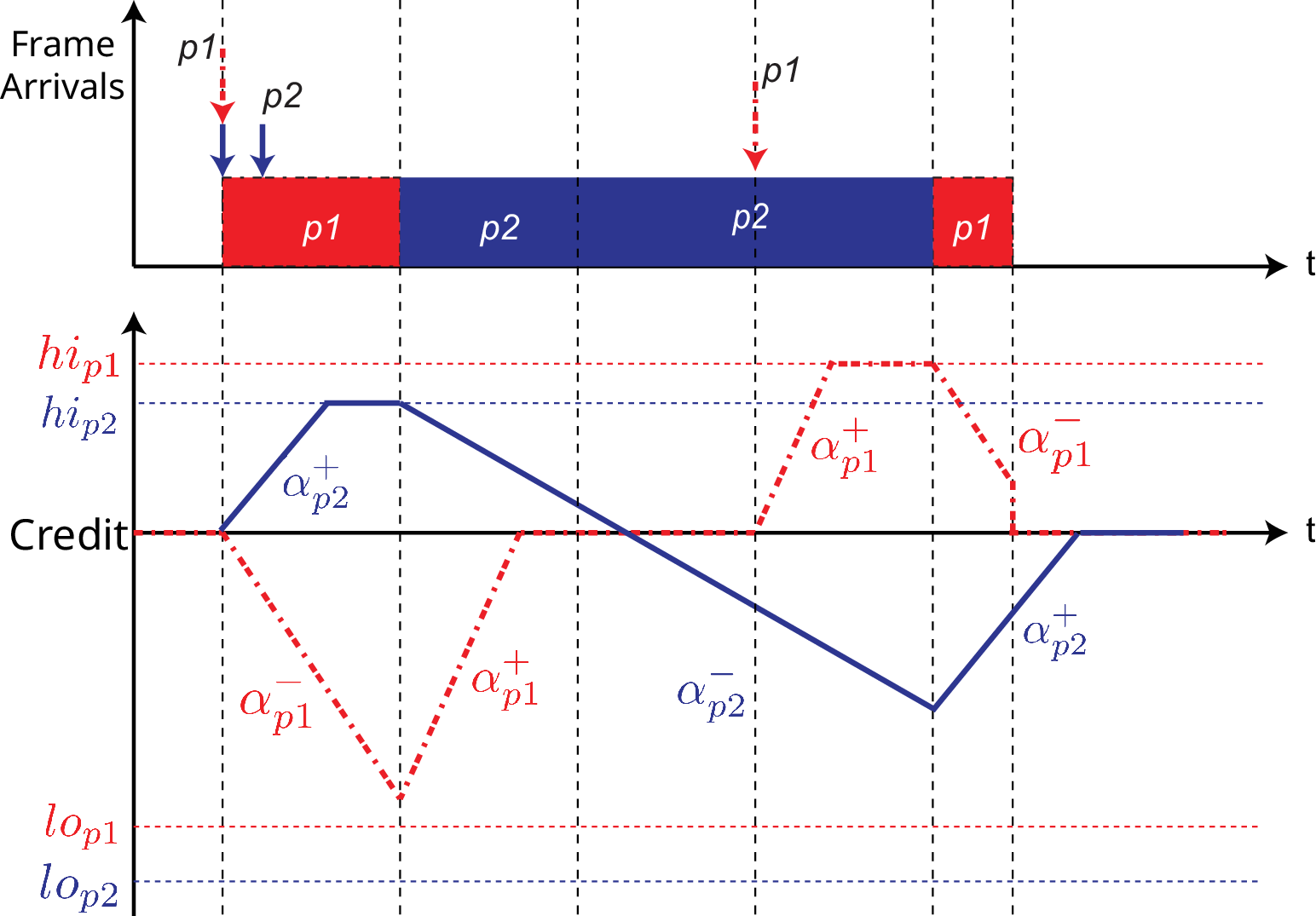}
\caption{CBS example with priorities $p_1$ (high) and $p_2$ (low)}
\label{Fig_cred_evol} 
\end{figure}
\subsection{Why Does Classical CBS Fail in a 5G NR Scheduler?} 
\label{subsec:cbs_challenges_wireless}
The key assumptions in IEEE 802.1Qav CBS are:
(i) \emph{continuous time} credit evolution;
(ii) a \emph{single, fixed link rate} $R_{\text{link}}$; and
(iii) shaping a single queue at a particular time instant that outputs frames without needing  complex resource allocation procedures.
On the other hand, 5G NR presents a fundamentally different operational context, invalidating these assumptions and posing significant challenges for directly applying CBS:
\subsubsection{Slot-quantized time} NR operates in discrete slots ($T_{\text{slot}}$), requiring discrete credit updates. 
\subsubsection{Grant-dependent rate} 5G service rates  are grant-dependent, unlike the fixed link rate in CBS. 
\subsubsection{Multi-user multiplexing} A wired port serves one frame at a time; NR can schedule multiple UEs/flows concurrently. 
\subsubsection{Feedback-coupled retransmissions} Automatic repeat request (ARQ) and Hybrid-ARQ (HARQ) can force a protocol data unit (PDU) to be transmitted multiple times; classical CBS is designed for reliable wires and  has no such mechanisms.

Consequently, these departures from the classical CBS--specifically,  discrete operational timeline, highly dynamic service rates tied to scheduling grants and channel quality, concurrent servicing of multiple users, and  the complexities introduced by wireless MAC/PHY feedback loops--invalidate key assumptions, making  direct porting  non-trivial. Effectively applying credit-based shaping within the 5G NR gNB mandates a rethinking of the mechanism, particularly requiring a discrete-time credit evolution logic accounting for the variable service provided in each transmission time interval (TTI). 
%
\label{sec_background}
\section{System Model}
\label{sec:system_model}
\begin{figure}[t]
\centering
\includegraphics[scale=0.7]{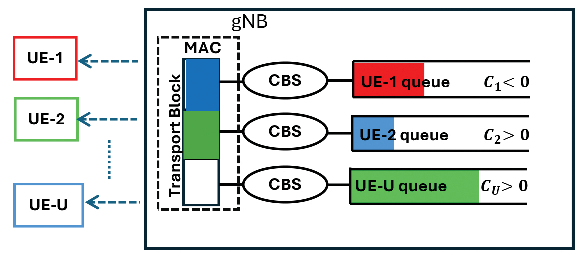}
\caption{Per-UE CBS gating in 5G NR gNB downlink}
\label{Fig_sys}
\end{figure}
We consider a 5G~NR downlink 
operating in frequency division duplex (FDD) with one gNB serving a finite set of UEs $\mathcal{U}=\{1,\dots,U\}$ as shown in Fig.\ref{Fig_sys}. Time is slotted: slots are indexed by $n\in\mathbb{Z}_{\ge 0}$ with duration $T_{\text{slot}}>0$.
\subsection{Time--frequency resources and decision variables}
In slot $n$ the scheduler can allocate up to $B_{\text{RB}}[n]\in\mathbb{Z}_{\ge 0}$ resource blocks (RBs) for \emph{data} on the DL shared channel. Let $x_u[n]\in\mathbb{Z}_{\ge 0}$ be the RBs granted to UE $u$ for \emph{new} transmissions in slot $n$, and let $y_u[n]\in\mathbb{Z}_{\ge 0}$ be the RBs devoted to HARQ (if any). The RB budget and control-channel limits impose
\begin{align}
&\sum_{u\in\mathcal{U}}\big(x_u[n]+y_u[n]\big)\ \le\ B_{\text{RB}}[n], \label{eq:rb-budget}\\
&\sum_{u\in\mathcal{U}}\mathbbm{1}\{x_u[n]>0\}\ \le\ K, \label{eq:k-cap}
\end{align}
where \eqref{eq:k-cap} captures the per-slot cap on \emph{new} grants due to PDCCH/CORESET capacity, with $K$ being the maximum number of new (non-HARQ) transmissions in a single slot. A new transmission for $u$ uses an MCS $m_u[n]\in\mathcal{M}$.
\subsection{Information pattern}
At the beginning of slot $n,$ the gNB observes
\[
\mathcal{F}_n {\!=\!} \Big( \{Q_u[n]\}_u,\ \{\text{CQI}_u[n]\}_u,\ \{\text{HARQ state}_u[n]\}_u,\ B_{\text{RB}}[n]\Big),
\]
where $Q_u[n]\in\mathbb{Z}_{\ge 0}$ is the DL backlog (bytes) known from buffer accounting, $\text{CQI}_u$ is the most recent channel-quality indicator (CQI), and the HARQ state specifies which processes await retransmission. The scheduler selects $(x_u[n],y_u[n],m_u[n])_{u\in\mathcal{U}}$ as $\mathcal{F}_n$-measurable decisions satisfying \eqref{eq:rb-budget}--\eqref{eq:k-cap}.
\subsection{Traffic, queues, and deadlines}
Each UE carries one QoS flow. 
Exogenous arrivals $A_u[n] \in \mathbb{Z}_{\ge 0}$ (bytes) enter the DL buffer of UE $u$; we allow the arrival process $\{A_u[n]\}$ to be stationary and ergodic. In our evaluation, we instantiate $\{A_u[n]\}$ as an ON/OFF traffic process to approximate the burstiness of aggregated TSN-like flows while keeping the load controllable.
The backlog recursion (in bytes) is:
$Q_u[n{+}1] {\;=\;} \max\!\big\{Q_u[n] - \mu_u[n],\,0\big\} \;+\; A_u[n],$
where $\mu_u[n]$ denotes the \emph{new-transmission service}, i.e., bytes removed from the radio link control (RLC) transmit buffer and packed into MAC PDUs for their first transmission. HARQ retransmissions are served from MAC HARQ process buffers without reducing $Q_u[n]$ and are never blocked by negative credit, which is consulted only when selecting new transmissions.
\subsection{PHY abstraction and service accounting}
Given an allocation of RBs $x_u[n]$ and an MCS $m_u[n]$, the physical layer determines a TBS $\mathrm{TBS}_u[n]$ (bytes) for that grant. Let $\mu_u[n]$ denote the bytes actually delivered by \emph{new} transmissions in slot $n$ (HARQ retransmissions are handled first). This value is determined by $\mu_u[n] = \min\{\mathrm{TBS}_u[n],\, Q_u[n]\}.$ If $\mathrm{TBS}_u[n] > Q_u[n]$, the excess is padding and does not change the queue state $Q_u[n]$; under CBS-PU this padding also does not incur credit debit (whereas CBS-DT would debit by $\mathrm{TBS}_u[n]$).
\subsection{HARQ process}
Each UE maintains $N_{\text{HARQ}}$ stop-and-wait processes. A NACK for a new transmission in slot $n$ enqueues a HARQ job that \emph{must} be scheduled as a retransmission in a future slot using $y_u[\cdot]$ resources until the transport block (TB) is successfully acknowledged or the process fails after exhausting a configured maximum number of retransmission attempts. Retransmissions consume RB budget but leave $Q_u[\cdot]$ unchanged (TB reside in HARQ memory). In every slot, retransmissions are scheduled \emph{prior} to new grants; thus, the residual budget may be smaller than $B_{\text{RB}}[n]$.
\subsection{Eligibility sets and scheduler}
Let $\mathcal{E}_{\text{MAC}}[n]$ be the set of UEs that are eligible for \emph{new} transmissions in slot $n$ from a pure MAC perspective:
\[
\mathcal{E}_{\text{MAC}}[n]{=}\big\{\, u{\in}\mathcal{U}:Q_u[n]{>}0,\ \text{a free HARQ process exists at }u \,\big\}.
\]
A credit-gating layer (Sections~\ref{sec_cbsstd}--\ref{sec_cbspu}) then filters $\mathcal{E}_{\text{MAC}}[n]$ to a set $\mathcal{E}[n]\subseteq \mathcal{E}_{\text{MAC}}[n]$; its internal dynamics are defined entirely in those sections and are not repeated here. A scheduler maps $\mathcal{E}[n]$ and $\mathcal{F}_n$ to $(x_u[n],m_u[n])$ under \eqref{eq:rb-budget}--\eqref{eq:k-cap}.
\subsection{Load and capacity notation}
Define the long-term mean offered rate $R_u\triangleq \lim_{N\to\infty}\frac{1}{N}\sum_{n=0}^{N-1}\mathbb{E}[A_u[n]]$ (bytes/slot), with aggregate offered load $\Lambda_{\text{DL}}=\sum_{u}R_u$. Let the \emph{measured} DL service rate under a given numerology/bandwidth/scheduler/channel be:
\begin{equation}
C_{\text{DL}} \;\triangleq\; \lim_{N\to\infty}\frac{1}{N}\sum_{n=0}^{N-1}\mathbb{E}\!\left[\sum_{u}\mu_u[n]\right]\!,
\end{equation}
and report the normalized load factor $\rho \triangleq \Lambda_{\text{DL}}/C_{\text{DL}}$. Let $C_{\text{res}}$ denote the long-term residual service rate available for \emph{new} transmissions after counting for HARQ retransmissions. For the rate reservation guarantees to be feasible, we assume an admission control policy is in place that ensures the system is not overprovisioned.  
\begin{equation}
    \sum_{u \in U} \Delta C_u \le C_{\text{res}}.
    \label{eq:admission_control}
\end{equation}
\paragraph*{Slot timeline}
At slot start: observe $\mathcal{F}_n$; \emph{(i)} allocate RBs to HARQ retransmissions ($y_u[n]$); \emph{(ii)} form $\mathcal{E}_{\text{MAC}}[n]$ and apply the credit gate to obtain $\mathcal{E}[n]$; \emph{(iii)} allocate RBs/MCS to new transmissions for $u\in\mathcal{E}[n]$ (respecting $K$); \emph{(iv)} data depart $Q_u$ according to $\mu_u[n]$; \emph{(v)} update credits; advance to $n{+}1$.
\section{Discrete-Time Credit-Based Shaping for 5G NR} 
\label{sec_cbsstd}
We now describe our proposed CBS-DT mechanism, which we have implemented at the gNB MAC layer to provide  priority differentiation among UEs. This mechanism adapts concepts from IEEE 802.1Qav to the slotted, resource-block-based structure of the 5G NR air interface.
\subsection{Per-UE Credit State and Dynamics}
 For each UE $u\in \mathcal{U}$ with assigned priority $p(u)$, we maintain a state variable $C_u[n]$, called \emph{credit},  representing the accumulated transmission allowance (in bytes) available at the \emph{beginning} of slot $n$. This credit value dictates the UE's eligibility for receiving new transmission grants. The credit evolves from $C_u[n]$ to $C_u[n+1]$ involving a credit  \emph{allowance} based on a configured reserved rate (analogous to \texttt{idleSlope}) and a credit \emph{debit}~$D_u[n]$ based on transmission activity in slot $n$. Specifically, we define the per-slot credit allowance, $\Delta C_u$ (bytes), derived as $\Delta C_u{\triangleq} \text{idleSlope}_{p(u)}{ \cdot} T_{\text{slot}}$. While continuous-time CBS also incorporates a debit mechanism based on its constant \texttt{sendSlope} (as detailed in Eq.~\eqref{eq:credit_decrease}), we propose a slot-dependent debit amount $D_u[n]$ (bytes).
 \begin{algorithm}[t!]
\caption{Per-slot credit update for UE $u$ at slot $n$}
\label{alg:credit_update_logic}
\begin{algorithmic}[1]
\Require $C_u[n]$ (current credit), $Q_u[n]$ (backlog in bytes), $\Delta C_u$ (allowance/slot),
         grant indicator $G_u[n]\!\in\!\{0,1\}$, grant size $\mathrm{TBS}_u[n]$ (bytes),
         bounds $lo_{u} < hi_{u}$, variant $\mathsf{VAR}\in\{\mathsf{DT},\mathsf{PU}\}$
\Ensure $C_u[n\!+\!1]$
\State \textbf{Accumulate / Idle / Reset}
\State $C^{\mathrm{pre}} \gets \mathrm{PreDebitUpdate}\big(C_u[n],\,\Delta C_u,\,Q_u[n]\big)$
\Comment{policy-defined; see note below}
\State \textbf{Compute debit}
\If{$G_u[n]=0$}
    \State $D_u[n] \gets 0$
\ElsIf{$\mathsf{VAR}=\mathsf{DT}$}
    \State $D_u[n] \gets \mathrm{TBS}_u[n]$ \Comment{debit by granted bytes}
\Else \Comment{$\mathsf{VAR}=\mathsf{PU}$}
    \State $D_u[n] \gets \min\{\mathrm{TBS}_u[n],\,Q_u[n]\}$ \Comment{debit by actually served bytes}
\EndIf
\State \textbf{Clamp}
\State $C_u[n\!+\!1] \gets \min\!\big\{\max\{\,C^{\mathrm{pre}}-D_u[n],\,lo_{u}\,\},\,hi_{u}\big\}$
\State \textbf{return} $C_u[n\!+\!1]$
\end{algorithmic}
\end{algorithm}
Let $X_u[n+1]$ represent the intermediate, unclamped credit  calculated at the \emph{end} of slot $n$ (before final clamping), based on the state $C_u[n]$ entering the slot and increments/debits occurring within slot $n$. The state evolves according to:
\begin{equation}
    X_u[n+1] = f(C_u[n], \Delta C_{u}, Q_{u}[n]) - D_u[n]
    \label{eq:unclamped_update}
\end{equation}
where $\Delta C_{u}$ is the per-slot credit allowance ($\text{idleSlope}_{p(u)} {\cdot }T_{\text{slot}}$), $D_u[n]$ is the debit term (derived later for each CBS variant), and the function $f(C, \Delta C^{\text{idle}}, Q^{\text{DL}})$ encapsulates the discrete-time adaptation of classical CBS credit accumulation, recovery, and reset rules \emph{prior to debiting}:
\begin{equation}
\small
    f(C, \Delta C, Q_{u}[n]) {\triangleq} 
            \begin{cases}
                \min(C {+ }\Delta C, 0) & \text{if } C{ <} 0 \\ 
                0 & \text{if } C {>} 0 \land Q_{u}[n]{=} \emptyset \\ 
                C {+ }\Delta C & \text{otherwise} 
            \end{cases}
    \label{eq:time_update_func}
\end{equation}
\normalsize
Here, $Q_{u}[n]> 0$ signifies a non-empty buffer  for UE $u$ and $D_u[n]$ is the debit term. The final credit state $C_u[n+1]$ is obtained by applying a clamping function $\Gamma(\cdot)$ that enforces the bounds $[lo_u, hi_u]$:
\small 
\begin{equation}
    C_u[n+1] = \Gamma(X_u[n+1]) \triangleq \min(\max(X_u[n+1], lo_u), hi_u)
    \label{eq:final_clamp_gamma_func}
\end{equation}
\normalsize
Note that the $\min(\cdot, lo_u)$ part of $\Gamma(\cdot)$ effectively applies to the output of $f(\cdot)$ before the debit $D_u[n]$ is subtracted in~\eqref{eq:unclamped_update} if we strictly follow the multi-step process; however, for analysis,~\eqref{eq:unclamped_update}--\eqref{eq:final_clamp_gamma_func} capture the overall state transition.
%
%
For the CBS-DT adaptation, the debit $D_u[n]$ is coupled to resource granted by the MAC scheduler in slot $n$. If a new transmission grants yields $\text{TBS}_u[n]{>}0$ bytes, the debit equals the full grant size: 
\begin{equation}
D_u[n] = \text{TBS}_u[n] \quad (\text{CBS-DT})
\label{eq:debit_std_tbs_final}
\end{equation}
If no new grant is issued, $D_u[n]{=}0$. Since the  $\text{TBS}_u[n]$  varies with channel quality and resource allocation, $D_u[n]$ is inherently a stochastic quantity unlike in CBS.  
\subsection{CBS-DT Stability Conditions}
\label{subsec:stochastic}
The use of a \emph{grant-dependent} debit, $D_u[n]$~\eqref{eq:debit_std_tbs_final}, introduces stochasticity tied to CQI and resource allocation, departing significantly from the CBS. Nevertheless, the fundamental stability conditions can be established, which are: (i) $C_u[n]$ must remain bounded $\forall$ $n$; (ii) queues must recover from credit deficits. Credit boundedness is directly enforced  by the clamping function $\Gamma(\cdot)$~\eqref{eq:final_clamp_gamma_func} where $C_u[n] \in [lo_u, hi_u]$ $\forall$ $n \ge 0$. Deficit recovery (reaching $C_u[n]{ \ge }0$ from any $C_u[k]{<}0$, including the lower bound $lo_u$) is guaranteed because the per-slot allowance $\Delta C_{u}{ > }0$ provides a constant positive increment towards the non-negative region whenever no debit occurs~\eqref{eq:time_update_func}.  
\subsection{Eligibility Criterion and MAC Interaction}
\label{subsec:eligibility}
The proposed CBS-DT mechanism shapes traffic by controlling the eligibility of each UE based on accumulated credits.  The overall eligibility set $\mathcal{E}[n]$ at slot $n$ is determined by obtaining the underlying MAC scheduler eligibility set  $\mathcal{E}_{\text{MAC}}[n]$: 
\begin{equation}
\mathcal{E}_{\text{MAC}}[n] = \left\{
\begin{aligned}
& u \in \{1, \dots, U\},\ Q_u[n] > 0, \\
& \text{HARQ}_u\ \text{available},\ u \notin \text{ReTxSet}[n]
\end{aligned}
\right\}
\end{equation}
Here, $\text{HARQ}_u\text{ available}$ indicates user $u$ has free HARQ process and $u \notin \text{ReTxSet}[n]$ ensures that user $u$ is not already scheduled for HARQ retransmission in $n$. The CBS-DT further restricts eligibility by checking the credit condition, $\mathcal{E}[n]{=}{u \in \mathcal{E}_{\text{MAC}}[n]} \large{\mid} C_u[n]\geq0$.
Thus, CBS-DT serves as a credit-based gate layered atop the existing MAC layer.  
\section{Credit-Based Shaping with Partial Usage}
\label{sec_cbspu}
A key challenge arises with the CBS-DT adaptation due to the practicalities of 5G NR scheduling. While the gNB knows the precise downlink queue size, the selected $\text{TBS}[n]$ must conform to the quantized values mandated by {3GPP}~\cite{ts38214}.  Consequently whenever the granted \(\text{TBS}_u[n]\) exceeds $Q_u[n]$, padding is added. CBS-DT, by debiting the \emph{entire} grant $\text{TBS}_u[n]$~(\ref{eq:debit_std_tbs_final}), therefore over-penalizes UE credit. This drives the credit deeper into deficit postponing the next eligibility event, delaying subsequent transmission eligibility particularly affecting UEs with bursty arrival traffic patterns.  We therefore introduce \emph{Credit-Based Shaping with Partial Usage (CBS-PU)}, retaining the CBS-DT structure outlined in Section~\ref{sec_cbsstd}. The sole modification lies in the definition of the debit term $D_u[n]$ used in~(\ref{eq:unclamped_update}).
\subsection{CBS-PU Debit Mechanism}
\label{subsec:pu_debit}
The defining characteristic of CBS-PU is that the credit decrement $D_u[n]$ is proportional to the \emph{actual data transmitted} within the granted resources, rather than the full granted $\text{TBS}_u[n]$.
The actual bytes for transmission can be calculated as:
\begin{equation}
    B_{u}^{\text{actual}}[n] = \min(\text{TBS}_u[n], Q_{u}[n])
    \label{eq:actual_bytes_pu_rev}
\end{equation}
The debit term $D_u[n]$ in~\eqref{eq:unclamped_update} under CBS-PU is:
\begin{equation}
    D_u[n] = B_{u}^{\text{actual}}[n] \quad (\text{CBS-PU})
    \label{eq:debit_pu_actual_final_rev}
\end{equation}
If no new grant is issued ($\text{TBS}_u[n]=0$), then $B_{u}^{\text{actual}}[n]=0$ and thus $D_u[n]=0$. This ensures credit is consumed only according to the utilized portion of the grant.
\subsection{Theoretical Properties}
\label{subsec:pu_properties}
CBS-PU retains the stability conditions established for the  CBS-DT adaptation while offering improved performance characteristics. Under the admission control assumption in \eqref{eq:admission_control}, the gNB has sufficient long-term resources to satisfy all configured rate reservations. Consequently, for a persistently backlogged queue, the fraction of time spent at the upper credit limit is negligible. This allows the rate preservation property to be formally established: 
\begin{theorem}
\label{thm:pu_properties}
The proposed CBS-PU mechanism ensures:
\begin{enumerate}
    \item Credit Dominance: $C_u^{\text{PU}}[n] \ge C_u^{\text{DT}}[n]$ for all $n \ge 0$, given identical initial conditions and system inputs.
    \item Rate Preservation: For persistently backlogged queues, the long-term average service rate converges to configured per-slot credit allowance: $\lim_{N \to \infty} \frac{1}{N} \sum_{n=0}^{N-1} B_u^{\text{actual}}[n]{=}\Delta C_u$.
    \item Delay Bound Preservation: The worst-case delay under CBS-PU is never greater than under CBS-DT: $W_u^{\text{PU}}~{\le}~W_u^{\text{DT}}$.
\end{enumerate}
\end{theorem}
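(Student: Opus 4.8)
\emph{Setup and credit dominance.} I would establish the three claims in order, since (2) and (3) both build on (1), and I would carry out every comparison between CBS-PU and CBS-DT along a \emph{common} realized grant trajectory $\{(G_u[n],\text{TBS}_u[n])\}_n$ with common arrivals, CQI and HARQ outcomes; this isolates the debit rule and makes the two backlog recursions coincide, because $\mu_u[n]=\min\{\text{TBS}_u[n],Q_u[n]\}$ depends only on the grant and the backlog. \emph{Credit dominance} then follows by induction on $n$: the base case is the hypothesis $C_u^{\text{PU}}[0]=C_u^{\text{DT}}[0]$, and the step rests on three monotonicity facts --- (i) for fixed $\Delta C_u$ and fixed backlog, the pre-debit map $f$ of \eqref{eq:time_update_func} is non-decreasing in $C$, since each of its three branches is non-decreasing and the sole jump (at $C=0$ with a non-empty buffer) is upward as $\Delta C_u>0$; (ii) the clamp $\Gamma$ of \eqref{eq:final_clamp_gamma_func} is non-decreasing; and (iii) by Algorithm~\ref{alg:credit_update_logic}, $D_u^{\text{PU}}[n]=\min\{\text{TBS}_u[n],Q_u[n]\}\le\text{TBS}_u[n]=D_u^{\text{DT}}[n]$, with equality when no grant is issued. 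Chaining these gives $X_u^{\text{PU}}[n+1]\ge X_u^{\text{DT}}[n+1]$, and $\Gamma$ preserves the inequality, closing the induction.

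\emph{Rate preservation.} Since $B_u^{\text{actual}}[n]=D_u^{\text{PU}}[n]$, it suffices to control the long-run average debit. For a persistently backlogged UE the ``reset-on-empty'' branch of $f$ never fires, so the recursion telescopes to $C_u[N]-C_u[0]=\sum_{n=0}^{N-1}\big(\Delta C_u-D_u[n]-w_n\big)$, where $w_n\ge 0$ collects the allowance lost at the upper clamp $hi_u$ and at the $\min(\cdot,0)$ cap during deficit recovery (provided $lo_u\le\Delta C_u-\text{TBS}_{\max}$, its design value, the lower clamp never binds for an eligible UE, so $w_n$ carries no credit injection). Rearranging gives $\tfrac1N\sum_{n<N}D_u[n]=\Delta C_u-\tfrac1N\sum_{n<N}w_n-\tfrac1N\big(C_u[N]-C_u[0]\big)$; boundedness of $C_u\in[lo_u,hi_u]$ kills the last term, so it remains to show $\tfrac1N\sum_{n<N}w_n\to0$. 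The upper-clamp contribution is $o(N)$ exactly by the pre-theorem hypothesis (an admission-feasible, persistently backlogged UE served by a work-conserving scheduler over $\mathcal{E}[n]$ dwells at $hi_u$ for a vanishing fraction of slots); the recovery-cap contribution is $0$ whenever a single grant never exceeds the per-slot allowance (so the credit never goes negative) and is otherwise at most $\Delta C_u$ per recovery event, so showing that it too is $o(N)$ --- recovery events being sparse in the fine-grant regime underlying the reservation semantics --- is the step I expect to need the most care. Combining, the average tends to $\Delta C_u$.

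\emph{Delay bound, and the main obstacle.} For (3) I would read the worst-case wait off Lemma~\ref{lem:absolute-wait}: its time-to-first-service and inter-grant-separation components depend only on $K$ and the RR order and are identical for both variants, while the time-to-regain-eligibility component is non-decreasing in the deficit $-C_u[\cdot]$. Since each PU debit is at most the corresponding DT debit, the PU deficit is pathwise no larger, and with (1) ($u$ is credit-eligible under PU in every slot it is under DT) this yields $W_u^{\text{PU}}\le W_u^{\text{DT}}$; the two bounds coincide when the buffer is persistently full (padding vanishes and $D_u^{\text{PU}}=D_u^{\text{DT}}$), so the strict gain is realized only for partially filled or bursty grants, as CBS-PU intends. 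The main obstacle throughout is the common-grant convention: once the two variants run their \emph{own} credit-gated schedulers, their eligibility sets $\mathcal{E}[n]$ diverge, and a grant that PU issues to $u$ (because $C_u^{\text{PU}}[n]\ge0>C_u^{\text{DT}}[n]$) but DT does not can transiently drive $C_u^{\text{PU}}$ below $C_u^{\text{DT}}$; relaxing the convention would call for a joint monotonicity argument over credits, queues and eligibility that I do not expect to close cleanly, so I would state (1)--(3) under the fixed-grant coupling.
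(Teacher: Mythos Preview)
Your proof follows the same skeleton as the paper's sketch: (1) induction via monotonicity of $f$ and $\Gamma$ combined with $D^{\text{PU}}\le D^{\text{DT}}$; (2) telescoping the credit recursion and using boundedness of $C_u$; (3) eligibility dominance from credit dominance. You go further than the paper in two respects. For (2), the paper simply asserts that $\sum_{n}(\Delta C_u-D_u[n])$ remains bounded because $C_u\in[lo_u,hi_u]$, whereas you correctly separate out the clamp-loss terms $w_n$ and note they must be shown $o(N)$; the paper offloads the upper-clamp part to the pre-theorem admission-control remark and does not mention the recovery-cap contribution at all, so your treatment is the more rigorous one. You also make the fixed-grant coupling explicit and flag that, once the two variants evolve their \emph{own} eligibility sets, pathwise credit dominance can fail---the paper's phrase ``identical system inputs'' leaves this implicit, and its sketch for (3) simply assumes ``the same MAC scheduling policy for eligible UEs'' without confronting the divergence you identify. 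Your route for (3) via Lemma~\ref{lem:absolute-wait} is a minor variation on the paper's direct eligibility-to-departure argument; both are valid under the coupling you state.
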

\begin{proof}[Proof sketch]
(a) Follows by induction on $n$.  Assume $C_u^{\text{PU}}[k]{ \ge }C_u^{\text{DT}}[k]$. Since $f(\cdot)$ and $\Gamma(\cdot)$ are monotonic non-decreasing with respect to credit input, and $D_u^{\text{PU}}[k] = B_{u}^{\text{actual}}[k] \le \text{TBS}_u[k] = D_u^{\text{DT}}[k]$, applying Eqs.~\eqref{eq:unclamped_update}-\eqref{eq:final_clamp_gamma_func} shows $C_u^{\text{PU}}[k+1] \ge C_u^{\text{DT}}[k+1]$.
(b) Let $\alpha_u^+ = \text{idleSlope}_{p(u)}$ be the configured reserved rate. Credit $C_u[n]$ is bounded in $[lo_u, hi_u]$. For a persistently backlogged queue over a long interval $N$, the total credit change $\sum_{n=0}^{N-1} (\Delta C_u - D_u[n])$ must remain bounded. Thus, $\frac{1}{N}\sum D_u[n] \to \Delta C_u$. Since $D_u[n] = B_u^{\text{actual}}[n]$ and $\Delta C_u = \alpha_u^+ \cdot T_{\text{slot}}$, the long-term average actual transmitted rate $\frac{1}{N T_{\text{slot}}} \sum B_u^{\text{actual}}[n]$ converges to $\alpha_u^+$. (c) Follows directly from credit dominance. Since $C_u^{\text{PU}}[n] \ge C_u^{\text{DT}}[n]$, a queue cannot become eligible later under CBS-PU than under CBS-DT. With the same MAC scheduling policy for eligible UEs, the departure time under PU must be less than or equal to that under CBS-DT for any given arrival, hence $W_u^{\text{PU}} {\le }W_u^{\text{DT}}$.
\end{proof}
Algorithm~\ref{alg:credit_update_logic} summarizes the slot-level credit recursion: at each slot we (i) apply the pre-debit update \(f(\cdot)\) in Eq.~\eqref{eq:time_update_func} (accumulation, recovery, and positive-credit reset), (ii) subtract a debit equal to the granted TBS (CBS-DT) or the actually delivered bytes (CBS-PU), and (iii) clamp the result to \([lo_u,hi_u]\). Only \emph{new} (non-HARQ) grants incur a debit. The primary advantage of CBS-PU arises from property (a). By maintaining higher (or equal) credit levels compared to CBS-DT, CBS-PU facilitates faster recovery from credit deficits. This translates to reduced time spent ineligible, and enables higher resource utilization efficiency, as validated in Section~\ref{subsec:utilization}.
\section{Computational Complexity and Scalability}\label{sec:complexity}
The practical adoption of the proposed CBS mechanisms hinges on their feasibility within a gNB’s real-time MAC budget. This section analyzes runtime and memory complexity under two, naive and event-driven, implementations.
\subsection{Assumptions used in analysis}
\label{subsec:assumptions}
For every traffic class, the per-slot allowance is strictly positive: $\Delta C_u = \texttt{idleSlope}_{p(u)} T_{\text{slot}} > 0$.  Credit is debited only on a \emph{new} transmission grant (no debit on HARQ retransmissions), consistent with Section~\ref{sec:system_model}. The MAC selector over the \emph{eligible} set is RR with at most $K$ new grants per slot; numerology is fixed with $T_{\text{slot}}=1$\,ms.
\subsection{Naive Implementation}
A straightforward implementation updates credit and checks eligibility for \emph{all} $U$ UEs every slot.
\begin{proposition}[Worst-case per-slot cost]
A naive implementation of CBS-DT or CBS-PU has per-slot cost $T_{\text{slot}}^{\text{CPU}} = \Theta\big(U + G_n\big),$
where $G_n$ is the number of \emph{new} grants issued in slot $n$.
\end{proposition}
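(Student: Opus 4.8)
The plan is to prove a matching upper and lower bound on the per-slot work, thereby establishing the tight $\Theta(U+G_n)$ characterization. First I would fix an arbitrary slot $n$ and walk through the slot timeline given at the end of Section~\ref{sec:system_model}, accounting for the cost of each stage under the naive (full-scan) implementation. The accumulation/idle/reset update $f(\cdot)$ from Eq.~\eqref{eq:time_update_func} together with the clamp $\Gamma(\cdot)$ from Eq.~\eqref{eq:final_clamp_gamma_func} is applied to \emph{every} UE, since the naive scheme does not track which UEs are active; each such update is a constant number of comparisons, additions, and a $\min/\max$ pair, so this phase costs $\Theta(U)$. Forming $\mathcal{E}_{\text{MAC}}[n]$ and then filtering to $\mathcal{E}[n]$ by the credit test $C_u[n]\ge 0$ likewise touches all $U$ UEs at $O(1)$ per UE, contributing another $\Theta(U)$. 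Issuing the $G_n$ new grants (RR pointer advance, TBS lookup, MCS assignment, debit computation per Algorithm~\ref{alg:credit_update_logic} lines for $\mathsf{DT}$ or $\mathsf{PU}$) costs $\Theta(1)$ per grant, hence $\Theta(G_n)$; note $G_n \le K$, but I keep it as $G_n$ for a load-sensitive statement. Summing the phases gives the upper bound $T_{\text{slot}}^{\text{CPU}} = O(U+G_n)$.

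For the lower bound I would argue that the credit-accumulation and eligibility phases are \emph{unavoidable} in the naive design: because the naive implementation by definition updates credit and re-evaluates eligibility for all $U$ UEs every slot (this is exactly what ``naive'' means here, and it is needed for correctness of the $f(\cdot)$ recursion whose idle branch fires precisely for non-transmitting UEs), at least one operation is charged to each of the $U$ UEs, giving $\Omega(U)$. Similarly, each of the $G_n$ grants requires at least the debit update of Algorithm~\ref{alg:credit_update_logic}, so $\Omega(G_n)$ work is forced. Since the two sets of operations are disjoint (distinct UEs, or distinct bookkeeping steps), the costs add, yielding $T_{\text{slot}}^{\text{CPU}} = \Omega(U+G_n)$. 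Combining the two directions establishes $T_{\text{slot}}^{\text{CPU}} = \Theta(U+G_n)$.

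The main obstacle is making the lower bound rigorous without circular reasoning: ``naive'' must be pinned down as the specific algorithm that scans all UEs, so that $\Omega(U)$ follows from the algorithm's definition rather than from an information-theoretic necessity (which would be false, as the event-driven variant in the next subsection shows one can do better by exploiting sparsity of activity). I would therefore state explicitly at the start of the proof that the naive implementation performs, in every slot, one invocation of the update in Eqs.~\eqref{eq:unclamped_update}--\eqref{eq:final_clamp_gamma_func} per UE and one eligibility test per UE, and that $T_{\text{slot}}^{\text{CPU}}$ counts elementary RAM operations (comparisons, additions, array accesses) with unit cost. A secondary, minor point is handling the edge cases $G_n=0$ (no grants; cost is still $\Theta(U)$, consistent with $\Theta(U+G_n)$) and whether the TBS computation is $O(1)$ — I would invoke the PHY abstraction of Section~\ref{sec:system_model}, where $\mathrm{TBS}_u[n]$ is a table/formula lookup given $(x_u[n],m_u[n])$, treating it as constant-time per grant. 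With those conventions fixed, the argument is a short, direct accounting and the two bounds meet.
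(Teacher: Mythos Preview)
Your proposal is correct and follows essentially the same approach as the paper: the paper's own justification simply asserts that per-slot credit evolution and the eligibility predicate cost $\Theta(U)$, RR assignment costs $\Theta(G_n)$, and the per-grant debit is $O(1)$. Your version is more careful---explicitly splitting into matching upper and lower bounds, pinning down what ``naive'' means so the $\Omega(U)$ direction is non-circular, and handling the $G_n=0$ and TBS-lookup edge cases---but the underlying accounting is identical.
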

\noindent\begin{proof}[Justification]
Per-slot credit evolution and the predicate $C_u[n]{\ge}\,0$ cost $\Theta(U)$; RR assignment costs $\Theta(G_n)$. The debit for DT ($\texttt{TBS}_u[n]$) or PU ($\min\{\texttt{TBS}_u[n],Q_u[n]\}$) is $O(1)$.
\end{proof}
\subsection{Event-Driven Implementation for Amortized Efficiency}
The naive loop wastes \emph{work}, which is per-TTI MAC bookkeeping (read/update \(C_u\), eligibility tests, queue/heap touches), on UEs in deficit (\(C_u<0\)), whose dynamics are deterministic while no new grant occurs. We replace per-slot updates with wake-ups: for any \(u\) with{ \(C_u[n]<0\)}, compute {\(w_u = n + \left\lceil \frac{-C_u[n]}{\Delta C_u} \right\rceil\)},
insert \((u,w_u)\) into a min-heap keyed by wake-up slot, and skip updates for \(u\) until \(n=w_u\).
\begin{lemma}[Safety of skipping]
Skipping applies only when $C_u{<}\,0$. While in deficit and without new grants,~(\ref{eq:time_update_func}) gives $C_{n+1}{=}\min(C{+}\Delta C, 0)$, so $C$ reaches $0$ exactly at $w_u$. For $C_u{>}\,0$, the \emph{positive-credit reset on empty queue} ($C{>}\,0 \wedge Q{=}\,\varnothing \Rightarrow \text{$C$ is reset to zero}$~(\ref{eq:time_update_func})) is handled as a \emph{local} event during post-grant bookkeeping for that UE; without a global scan.
\end{lemma}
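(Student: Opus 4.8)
The plan is to prove the two halves of the statement in turn: first, that the deterministic wake-up slot $w_u = n + \lceil -C_u[n]/\Delta C_u\rceil$ is exactly the slot at which a deficit UE regains non-negative credit, so that skipping every intermediate slot changes nothing observable; and second, that the one remaining branch of $f(\cdot)$ in~(\ref{eq:time_update_func}) that a pure wake-up scheme might otherwise miss — the positive-credit reset on an empty queue — only fires at an instant where the scheduler is already touching that UE, hence needs no global scan.

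For the first half, I would start from the fact that $C_u[n]<0$ forces $u\notin\mathcal{E}[n]$ by the credit-gating rule of Section~\ref{sec_cbsstd}, so no \emph{new} grant is issued to $u$; any HARQ retransmission that is still scheduled carries no debit by the standing assumption of Section~\ref{subsec:assumptions}, hence $D_u[m]=0$ for every slot $m$ of the deficit window. While $C<0$, the map $f(\cdot)$ takes its first branch irrespective of $Q_u[m]$, so arrivals during the deficit also do not enter the credit recursion. Consequently the update collapses to the scalar recursion $C_u[m{+}1]=\min(C_u[m]+\Delta C_u,\,0)$, and the clamp $\Gamma(\cdot)$ in~(\ref{eq:final_clamp_gamma_func}) is inactive because the invariant $lo_u\le C_u[m]\le 0<hi_u$ is preserved throughout. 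I would then iterate: writing $k^\star=\lceil -C_u[n]/\Delta C_u\rceil$, a one-line induction shows $C_u[n{+}j]=C_u[n]+j\,\Delta C_u<0$ for $1\le j\le k^\star-1$ (since $j<-C_u[n]/\Delta C_u$ there), while at $j=k^\star$ the defining inequality of the ceiling gives $C_u[n]+k^\star\Delta C_u\ge 0$, so the $\min$ clips the value to exactly $0$. Therefore $C_u[w_u]=0$ while $C_u[j]<0$ for all $n<j<w_u$, $u$ is ineligible at every skipped slot, and replaying the full per-slot loop over $[n,w_u]$ yields the identical state — skipping is lossless.

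For the second half, the key observation is that $Q_u$ can only decrease through the new-transmission service $\mu_u[m]$, which is non-zero only in a slot where $u$ is actually granted; hence the transition $Q_u>0\mapsto Q_u=\varnothing$ can occur only inside a slot in which $u$ is being serviced and the scheduler is already updating $u$'s state. At that point one evaluates the predicate $C_u>0\wedge Q_u=\varnothing$ and, if it holds, applies $C_u\gets 0$ as a local post-grant action. While the queue then stays empty, $u\notin\mathcal{E}_{\text{MAC}}[m]$, so the precise credit value is immaterial to scheduling (and remains within $[lo_u,hi_u]$ by $\Gamma(\cdot)$); the next slot in which $u$ can influence a decision is a queue-activation slot, i.e. an $A$-event that already wakes $u$ and recomputes its credit. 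Thus neither a deficit UE nor an idle UE ever requires a periodic $\mathcal{O}(U)$ visit.

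I expect the first half to be the part needing the most care: establishing that the simplified recursion $C_{m+1}=\min(C_m+\Delta C_u,0)$ holds over the \emph{entire} deficit window rests on the three facts acting jointly — deficit $\Rightarrow$ ineligible $\Rightarrow$ zero debit, the HARQ-no-debit convention, and $f(\cdot)$'s indifference to $Q_u$ when $C<0$ — and one must check the ceiling arithmetic so that $w_u$ is neither one slot early nor one slot late and $\Gamma(\cdot)$ never activates. The empty-queue half is essentially a bookkeeping remark and should stay brief.
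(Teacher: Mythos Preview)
Your proposal is correct and follows exactly the line of reasoning the paper intends; the paper in fact gives no separate proof for this lemma---the lemma statement itself \emph{is} the justification, and you have simply unpacked it rigorously (the deficit $\Rightarrow$ ineligible $\Rightarrow$ zero debit chain, the ceiling arithmetic for $w_u$, and the observation that $Q_u$ can only empty inside a grant slot). Nothing is missing or different in approach.
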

\begin{theorem}[Amortized complexity]
Over $N$ slots, the total work of the event-driven design is
\begin{equation}
    T_{1{:}N} = O(N + G) + O\big((A{+}G)\log U\big),
\end{equation}
where $A$ is the number of \emph{queue-becomes-non-empty} events and $G=\sum_n G_n$ is the total number of \emph{new-grant} events.
\end{theorem}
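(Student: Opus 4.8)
The plan is to bound $T_{1{:}N}$ by a charging argument that assigns every unit of scheduler work to one of three buckets: a per-slot baseline, new-grant processing, and wake-up heap operations. I would split the work done in slot $n$ into (i) the baseline cost of advancing the slot clock and peeking the min-heap to test whether its smallest wake-up key equals $n$; (ii) the cost of running RR over the eligible set to place the (at most $K$) new grants, together with the $O(1)$ post-grant bookkeeping per grant --- applying the debit $D_u[n]$ of Algorithm~\ref{alg:credit_update_logic}, performing the positive-credit reset of Eq.~\eqref{eq:time_update_func} when a served queue empties, removing that UE from the RR list if it empties or goes into deficit, and, if the debit pushes $C_u$ below $0$, inserting $(u,w_u)$ into the heap; and (iii) the heap insertions and extractions themselves. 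Dormant UEs (those with $C_u<0$ sitting in the heap, or those parked at credit $0$ with an empty queue) are never touched between their defining events; when needed, their credit is recomputed on demand in $O(1)$ from the closed form $C_u=\min(C_u^{\text{last}}+(n-n_{\text{last}})\Delta C_u,\,0)$, which is exactly the recursion of the Safety-of-Skipping lemma.

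Buckets (i) and (ii) give the $O(N+G)$ term. Bucket (i) contributes $O(1)$ per slot, hence $O(N)$ over $N$ slots. In bucket (ii) the RR placement costs $O(G_n)$ in slot $n$ (each pick is $O(1)$ on a circular doubly-linked eligible list) and the remaining per-grant bookkeeping is $O(1)$; summing over slots gives $O(\sum_n G_n)=O(G)$, with the heap insertion possibly triggered by a grant deferred to bucket (iii).

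The crux is bucket (iii). I would argue that a UE $u$ enters the wake-up heap \emph{only} at one of two event types: (a) a new grant whose debit drives $C_u$ into deficit while $u$ is backlogged --- at most $G$ of these; or (b) a queue activation ($Q_u$ becomes non-empty) that finds $C_u<0$ --- at most $A$ of these. The accounting is tight because a UE with $C_u<0$ is ineligible, so it receives no new grant, and since backlogs shrink only through new-transmission service, $Q_u$ cannot drain while $u$ waits in the heap; by the Safety-of-Skipping lemma its credit then reaches $0$ exactly at the stored slot $w_u$, where it is extracted once and handed to the RR list for a genuine eligibility transition. Hence insertions and extractions each number at most $A+G$, the heap holds at most one entry per UE so every operation is $O(\log U)$, and bucket (iii) contributes $O\big((A+G)\log U\big)$. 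Adding the three buckets yields $T_{1{:}N}=O(N+G)+O((A+G)\log U)$.

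I expect the main obstacle to be making the bucket-(iii) charging airtight: one must rule out spurious heap traffic --- a second insertion of a UE already in the heap, or an extraction not paired with a real deficit-to-eligibility transition --- and verify that credit clamping at $hi_u$ and the empty-queue reset are always realized as $O(1)$ \emph{local} updates during that UE's own post-grant bookkeeping, never via a global $O(U)$ scan; this is precisely what the Safety-of-Skipping lemma supplies, and I would invoke it at each such point. A lesser point to state explicitly is the implicit model convention that enqueuing an arrival into an already-backlogged queue is $O(1)$ bookkeeping performed outside the credit/eligibility machinery, so that only activations and new grants drive priority-queue work and $A,G$ are the right parameters.
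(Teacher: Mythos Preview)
Your proposal is correct and follows essentially the same approach as the paper's proof: the identical three-way split into per-slot heap-min peeks ($O(N)$), RR grant placement over the circular eligible list ($O(G)$), and heap operations charged to the at most $A{+}G$ insertions arising from grants driving a UE into deficit or an arrival finding a UE already indebted ($O((A{+}G)\log U)$). Your write-up is more explicit about the bookkeeping invariants (single heap entry per UE, local handling of the positive-credit reset, on-demand credit reconstruction) than the paper's sketch, but the underlying charging argument is the same.
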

\begin{proof}[Proof sketch]
Each slot performs an $O(1)$ heap-min peek, giving $O(N)$. RR grants cost $O(G)$ in total with the circular eligible list. Heap insertions arise when (i) a grant drives a UE into deficit or (ii) an arrival finds a UE indebted; thus at most $A{+}G$ insertions. Wake-ups are at most the number of insertions ($A{+}G$), while each heap operation costs $O(\log U)$, yielding the second term, $O\big((A{+}G)\log U\big)$.
\end{proof}
This event-driven implementation shifts cost from the UE population $U$ to \emph{event counts} $(A,G)$. In typical loads where many UEs spend time in deficit, $(A{+}G){\ll}\,U\cdot N$, and the scheduler’s cost tracks traffic, not population. Algorithm~\ref{alg:event_driven_cbs} gives the event-driven per-slot driver for the credit gate.
\begin{algorithm}[t]
\caption{Event-Driven CBS: Per-Slot Logic}
\label{alg:event_driven_cbs}
\begin{algorithmic}[1]
\Function{OnSlotStart}{$n$, eligible\_list, wakeup\_heap}
    \State \Comment{Process wake-ups}
    \While{wakeup\_heap.notEmpty() \textbf{and} wakeup\_heap.minKey() == $n$}
        \State $u \gets$ wakeup\_heap.extractMin()
        \State $u.\text{credit} \gets 0$
        \If{$u.\text{queueNotEmpty}()$ \textbf{and} $u.\text{hasFreeHARQ}()$}
            \State eligible\_list.add($u$)
        \EndIf
    \EndWhile
    \State \Comment{Serve HARQ first; updates RB budget and \texttt{num\_grants}}
    \State ScheduleHARQ()
    \State \Comment{Issue up to \texttt{num\_grants} RR allocations (bounded by RBGs)}
    \State granted $\gets$ RoundRobin(eligible\_list, num\_grants)
    \State \Comment{Post-grant bookkeeping for granted UEs}
    \For{$u$ \textbf{in} granted}
        \State debit $\gets$ CalculateDebit($u$) \Comment{$O(1)$ for DT or PU}
        \State $u.\text{credit} \gets u.\text{credit} - \text{debit}$
        \State $u.\text{credit} \gets \max(u.\text{credit},\,u.\text{lo})$ \Comment{clamp to $lo_u$ for wake calc}
        \If{$u.\text{queueIsNowEmpty}()$}
            \If{$u.\text{credit} > 0$}
                \State $u.\text{credit} \gets 0$ \Comment{positive-credit reset}
            \EndIf
            \State eligible\_list.remove($u$)
        \ElsIf{$u.\text{credit} < 0$}
            \State eligible\_list.remove($u$)
            \State wake $\gets n + \left\lceil \frac{-u.\text{credit}}{u.\text{delta\_c}} \right\rceil$
            \State wakeup\_heap.insert($u$, key=wake)
        \EndIf
    \EndFor
\EndFunction
\end{algorithmic}
\end{algorithm}
\subsection{Recovery Time and Discussions}
With a credit lower bound $lo_u{<}\,0$ and a per-slot allowance $\Delta C_u{>}\,0$, the  recovery time from a worst-case deficit, assuming no grants are issued,  is a deterministic value given by:
   $ T^{\text{rec}}_{\text{max}} = \Big\lceil -\,lo_u / \Delta C_u \Big\rceil \cdot T_{\text{slot}},$
which is \textit{tight}. An event-driven implementation exploits the deterministic negative-credit drift to yield amortized cost that tracks traffic events rather than population. By Theorem~1 (credit dominance \& no-worse delay), CBS-PU reduces deficit episodes, decreasing heap events and improving practical runtime over CBS-DT.
\subsection{Worst-Case Delay and Pacing Guarantees}
The deterministic nature of slot-native CBS allows derivation of hard bounds on waiting time and inter-grant separation. \paragraph*{Assumption (Bounded eligible burst)}
There exists a constant $E_{\max}\!\ge\!1$ such that, in any slot $n$, the number of UEs (excluding $u$) that are eligible (credit $\ge 0$) and still awaiting a \emph{new} grant is at most $E_{\max}$, enforced by admission control.
\begin{lemma}[Absolute waiting bounds] 
\label{lem:absolute-wait}
For UE $u$ with per-slot allowance {$\Delta C_u \triangleq \texttt{idleSlope}_{p(u)} T_{\mathrm{slot}} {> }0$}, credit clamp
$[lo_{u},hi_{u}]$ eligibility gate $C_u{\ge} 0$, and at most $K$ new grants/slot \textup{(cf.\ (4))} under RR over the eligible set, the following hold.

\textit{(A) Time to eligibility from any deficit.}
If a packet arrives at slot $n_0$ when $C_u[n_0]<0$, then
\begin{equation}
W^{\mathrm{elig}}_u \le \Big\lceil \frac{-C_u[n_0]}{\Delta C_u} \Big\rceil T_{\mathrm{slot}}
\le \Big\lceil \frac{-\ell^{u}_{\mathrm{lo}}}{\Delta C_u} \Big\rceil T_{\mathrm{slot}}.
\end{equation}
%
\textit{(B) Time to first grant once eligible.}
Let $n_{\mathrm{elig}}$ be the first slot with $C_u[n_{\mathrm{elig}}]\ge 0$. Let $E_{\text{max}}$ be a constant that bounds the number of eligible backlogged UEs over the horizon.  The additional wait \emph{after} eligibility is then bounded by,  $W^{\mathrm{queue}\mid\mathrm{elig}}_u \le \Big\lceil \frac{E_{\max}}{K} \Big\rceil T_{\mathrm{slot}}.$
Hence the total time to the first grant is
\begin{equation}
W^{\mathrm{svc}}_u \le W^{\mathrm{elig}}_u + \Big\lceil \frac{E_{\max}}{K} \Big\rceil T_{\mathrm{slot}},
\end{equation}
\textbf{(C) Inter-grant separation (pacing).}
If a new grant at slot $n^\star$ debits $D_u[n^\star]\le D_u^{\max}$, the post-grant credit satisfies
$C_u[n^\star{+}] \ge \max\{lo_{u},-D_u^{\max}\}$. Thus the largest possible post-grant deficit magnitude is
\[
\delta^{\max}_u \triangleq -\max\{lo_{u},-D_u^{\max}\} \;=\; \min\{-lo_{u},\,D_u^{\max}\}.
\]
Therefore the time until the UE becomes eligible again is bounded by
\begin{equation}
    W^{\mathrm{re-elig}}_u \;\le\; \Big\lceil \frac{\delta^{\max}_u}{\Delta C_u} \Big\rceil T_{\mathrm{slot}}
    \;=\; \Big\lceil \frac{\min\{-lo_{u},\, D_u^{\max}\}}{\Delta C_u} \Big\rceil T_{\mathrm{slot}}.
\end{equation}
Equivalently, in two regimes,
\[
W^{\mathrm{re-elig}}_u \;\le\;
\begin{cases}
\Big\lceil \dfrac{D_u^{\max}}{\Delta C_u} \Big\rceil T_{\mathrm{slot}}, & \text{if } D_u^{\max}\le -lo_{u},\\[0.8em]
\Big\lceil \dfrac{-lo_{u}}{\Delta C_u} \Big\rceil T_{\mathrm{slot}}, & \text{if } D_u^{\max}\ge -lo_{u}.
\end{cases}
\]
The total time until the next grant, $W^{\mathrm{cycle}}_u$, including subsequent queuing, is then
\begin{equation}
    W^{\mathrm{cycle}}_u \le W^{\mathrm{re-elig}}_u + \Big\lceil \frac{E_{\max}}{K} \Big\rceil T_{\mathrm{slot}}.
\end{equation}
\end{lemma}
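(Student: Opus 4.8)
The plan is to distill the whole lemma down to one deterministic fact about how credit evolves while a UE is in deficit, and then obtain (A), (B), and (C) by instantiating that fact at the relevant slot and composing it with an elementary round-robin (RR) counting argument. The fact is: if $C_u[n]<0$, the eligibility gate $C_u\ge 0$ excludes $u$ from $\mathcal{E}[n]$, so $u$ receives no new grant, $D_u[n]=0$, and the first branch of \eqref{eq:time_update_func} gives $C_u[n{+}1]=\min\{C_u[n]+\Delta C_u,0\}$; iterating while the credit stays negative yields $C_u[n{+}j]=\min\{C_u[n]+j\Delta C_u,0\}$, which is non-decreasing and first reaches $0$ (hence $\ge 0$) at $j^\star=\lceil -C_u[n]/\Delta C_u\rceil$. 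This single observation drives part (A) and the recovery half of part (C).

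For (A), I would apply the drift fact at $n_0$ with $C_u[n_0]<0$: eligibility is regained within $\lceil -C_u[n_0]/\Delta C_u\rceil$ slots, which is the first inequality; since the clamp \eqref{eq:final_clamp_gamma_func} enforces $C_u[n_0]\ge lo_u=\ell^{u}_{\mathrm{lo}}$ and $x\mapsto\lceil -x/\Delta C_u\rceil$ is non-increasing in $x$, replacing $C_u[n_0]$ by $lo_u$ only enlarges the bound, giving the second inequality. It is attained when the arrival coincides with $C_u=lo_u$ and no grant intervenes, so the bound is tight.

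For (B), once $C_u[n_{\mathrm{elig}}]\ge 0$ and $u$ is MAC-eligible (backlogged, free HARQ process) it stays in $\mathcal{E}[\cdot]$ until served. Under RR over $\mathcal{E}[\cdot]$, $u$ is never skipped, and any UE that newly becomes eligible—or is served but remains backlogged—is (re)inserted behind $u$ in the circular order; hence the set of UEs strictly ahead of $u$ never grows and has size at most $E_{\max}$ by the bounded-eligible-burst assumption. Each such UE can receive at most one grant before $u$'s first grant, so at most $E_{\max}$ new grants precede $u$'s; with at most $K$ new grants per slot (cf. \eqref{eq:k-cap}) this costs at most $\lceil E_{\max}/K\rceil$ slots, which is $W^{\mathrm{queue}\mid\mathrm{elig}}_u$, and composing with (A) gives $W^{\mathrm{svc}}_u$. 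I expect this RR bookkeeping to be the main obstacle: one must argue carefully that ``ahead of $u$'' is never replenished, and the stated ceiling $\lceil E_{\max}/K\rceil$ is a genuine upper bound only if admission control and the HARQ-first slot timeline leave enough RB budget that $K$ new grants are actually issuable whenever $\ge K$ UEs are eligible (otherwise the effective per-slot new-grant count can drop below $K$ and $E_{\max}$ must be chosen accordingly).

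For (C), a new grant at $n^\star$ forces $C_u[n^\star]\ge 0$ (eligibility) and $Q_u[n^\star]>0$ (MAC-eligibility), so the ``otherwise'' branch of \eqref{eq:time_update_func} gives $f(\cdot)=C_u[n^\star]+\Delta C_u\ge 0$; subtracting $D_u[n^\star]\le D_u^{\max}$ in \eqref{eq:unclamped_update} and clamping via \eqref{eq:final_clamp_gamma_func} (using $lo_u<hi_u$ and $-D_u^{\max}<hi_u$ for the upper clamp) yields $C_u[n^\star{+}]\ge\max\{lo_u,-D_u^{\max}\}$, so the post-grant deficit magnitude is at most $\delta^{\max}_u=\min\{-lo_u,D_u^{\max}\}$. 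Applying the drift fact from slot $n^\star{+}1$ then bounds $W^{\mathrm{re-elig}}_u$ by $\lceil\delta^{\max}_u/\Delta C_u\rceil T_{\mathrm{slot}}$; a case split on whether $D_u^{\max}\le -lo_u$ produces the two-regime expression, and appending the queuing term from (B) gives $W^{\mathrm{cycle}}_u$. Apart from the RR step in (B), every piece here is mechanical once the drift fact is in hand.
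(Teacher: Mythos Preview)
The paper states Lemma~\ref{lem:absolute-wait} without proof: the lemma block closes and Section~\ref{sec_results} begins immediately, so there is no paper argument to compare against.

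Your proposal is correct and is precisely the proof the paper's scaffolding invites. The ``drift fact'' you isolate---that in deficit $D_u=0$ and the first branch of \eqref{eq:time_update_func} yields $C_u[n{+}j]=\min\{C_u[n]+j\Delta C_u,0\}$, hence first hits $0$ at $j^\star=\lceil -C_u[n]/\Delta C_u\rceil$---is exactly the recovery-time formula the paper already records informally in Section~\ref{sec:complexity} (``Recovery Time and Discussions''). Part (A) is that formula at an arbitrary deficit, and the re-eligibility half of (C) is the same formula at the post-grant state. Your derivation of $C_u[n^\star{+}]\ge\max\{lo_u,-D_u^{\max}\}$ via the ``otherwise'' branch of \eqref{eq:time_update_func} and the clamp \eqref{eq:final_clamp_gamma_func} is clean and reproduces the lemma's claim verbatim.

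Your caution around (B) is well placed rather than excessive. The $\lceil E_{\max}/K\rceil$ bound relies on two implicit assumptions the paper uses but does not justify: (i) an RR convention in which newly eligible UEs and served-but-still-backlogged UEs are appended \emph{behind} $u$, so the prefix ahead of $u$ is monotone non-increasing (the paper's ``circular eligible list'' in Algorithm~\ref{alg:event_driven_cbs} is consistent with this but not explicit), and (ii) that $K$ new grants are actually issued whenever $\ge K$ UEs are eligible, i.e., HARQ and RB-budget depletion do not throttle new-grant throughput below $K$. Flagging these is the right call; the paper treats $K$ as the binding cap and $E_{\max}$ as an admission-control invariant without further argument.
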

\section{ Performance Evaluation}\label{sec_results}
\begin{figure}[b!]
    \centering 
    \begin{subfigure}[b]{0.48\columnwidth}
        \includegraphics[scale=0.42]{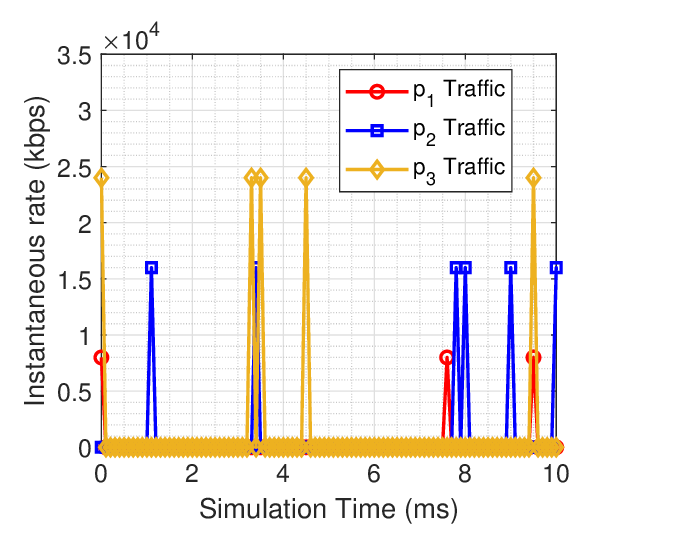} 
        \caption{Traffic-arrival pattern} 
        \label{fig:grid_tl} 
    \end{subfigure}
    \begin{subfigure}[b]{0.48\columnwidth}
        \centering
        \includegraphics[scale=0.42]{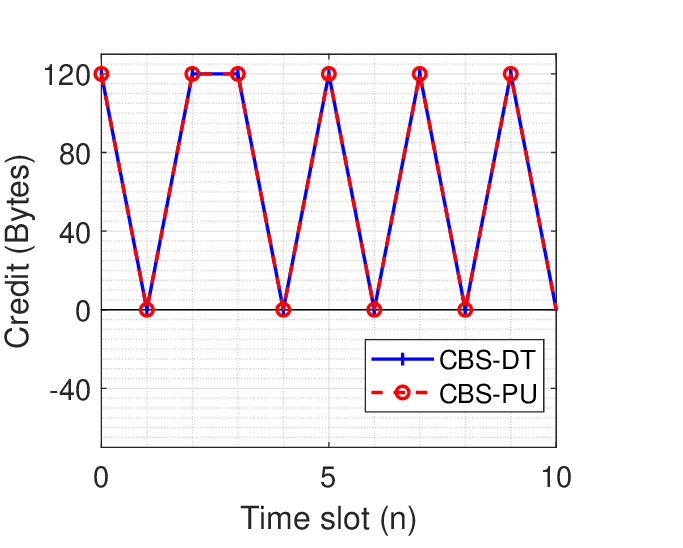} 
        \caption{$p_1$ - credits coincide.} 
        \label{fig:grid_tr}
    \end{subfigure}
    \begin{subfigure}[b]{0.48\columnwidth} 
        \includegraphics[scale=0.42]{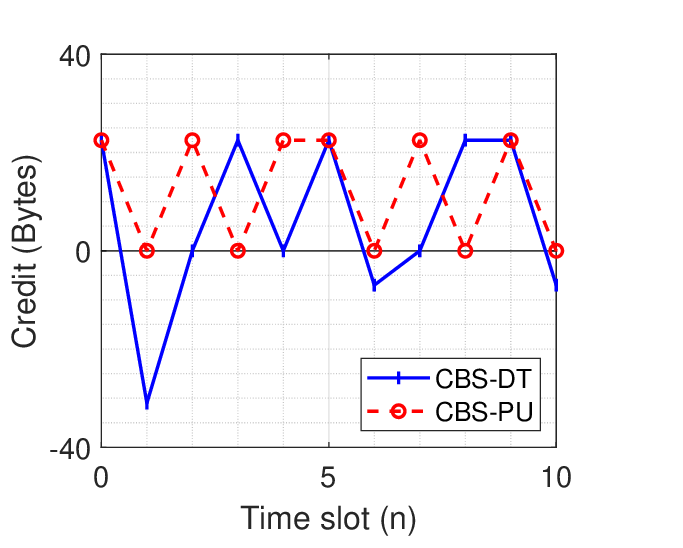}
        \caption{$p_2$ - PU stays higher.}
        \label{fig:grid_bl}
    \end{subfigure}
    \begin{subfigure}[b]{0.48\columnwidth} 
        \centering
        \includegraphics[scale=0.42]{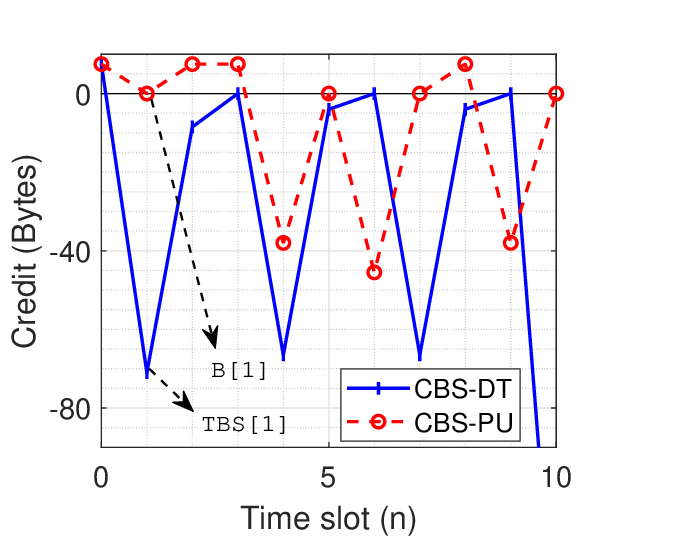} 
        \caption{$p_3$ - PU avoids over-debit.} 
        \label{fig:grid_br}
    \end{subfigure}
    \caption{Overview of traffic arrivals and credit evolution.} 
    \label{fig:grid_combined}
\end{figure}
To evaluate the performance of the proposed CBS-DT and CBS-PU mechanisms, we built a simulation framework upon the MATLAB 5G Toolbox. We simulate a single-cell FDD NR downlink with a 15~kHz subcarrier spacing (SCS), using the 3GPP Urban Macro (UMa) propagation model (TR 38.901); CQI reports are mapped to MCS according to 3GPP TS 38.214. 
Six static UEs are evenly split across three priority classes: $p_1$ (high), $p_2$ (medium), and $p_3$ (low). Each UE generates a packet stream with nominal rate of $\lambda = 450$~pkt/s and a fixed payload size $S_u \in \{80, 160, 240\}$~bytes for $p_1$, $p_2$, and $p_3$, respectively. The aggregate offered rate is defined as $\Lambda_{\text{DL}} = \sum_{u} \lambda S_u,$ and we report the normalized load $\rho = \Lambda_{\text{DL}} / C_{\text{DL}}$; we study both nominal load ($\rho=1.0$) and a controlled overload ($\rho=4.0$) to stress-test the scheduler's priority protection capabilities.
Class reservations (idleSlope shares) are set to \{75\%, 20\%, 5\%\} for priorities \{$p_1, p_2, p_3$\}, respectively. Per-UE credit clamps $[lo_{u}, hi_{u}]$ are derived following the Ethernet CBS standard~\cite{standard}. The credit gate renders a UE eligible in slot $n$ iff it is backlogged, has a free HARQ process, and its credit is non-negative. HARQ retransmissions consume radio resources but do not reduce the downlink queue $Q_u$; therefore, only the bytes from a first transmission debit the credit.
The CBS-PU mechanism relies on information readily available at the gNB. At the start of each slot, the gNB knows the queue size $Q_u[n]$, the chosen TBS, and the HARQ state. A new transmission service removes $\min\{\text{TBS}_u[n], Q_u[n]\}$ bytes from the buffer. Any padding added to the TB does not change $Q_u$. 
\subsection{Credit Evolution Analysis}
\label{subsec:credit_evo}
Fig.~\ref{fig:grid_combined} depicts representative DL traffic arrivals (Fig.~\ref{fig:grid_tl}) with credit ($C_u[n]$) trajectories of one UE per priority class under nominal load for  CBS-DT and CBS-PU.  High-priority credits, as shown in Fig.~\ref{fig:grid_tr}, oscillate between 0 and the configured high limit ($hi_u{\approx}\!120$ B) for both variants. The grants are fully utilized, preventing padding. Consequently, the debit terms are identical, resulting in coinciding credit trajectories. In such no-padding conditions, CBS-PU offers no additional benefit over CBS-DT, consistent with Theorem~\ref{thm:pu_properties}(a). Middle-priority UEs (Fig.~\ref{fig:grid_bl}) show more pronounced credit fluctuations. Under CBS-DT, credits dip moderately into negative territory following transmissions. With CBS-PU, these negative excursions are shallower and recovery is quicker, as the debit accurately reflects smaller actual transmissions ($B_{u}^{\text{actual}}[n]$) rather than the full $\text{TBS}_u[n]$. The largest contrast appears in  Fig.~\ref{fig:grid_br}.  CBS-DT frequently hits the lower credit values,  requiring lengthy recovery, while CBS-PU significantly mitigates these deficits by debiting only the actual usage. 
\begin{figure}[t] 
\captionsetup{justification=centering}
    \begin{subfigure}[b]{0.5\columnwidth}
        \centering
        \includegraphics[scale=0.48]{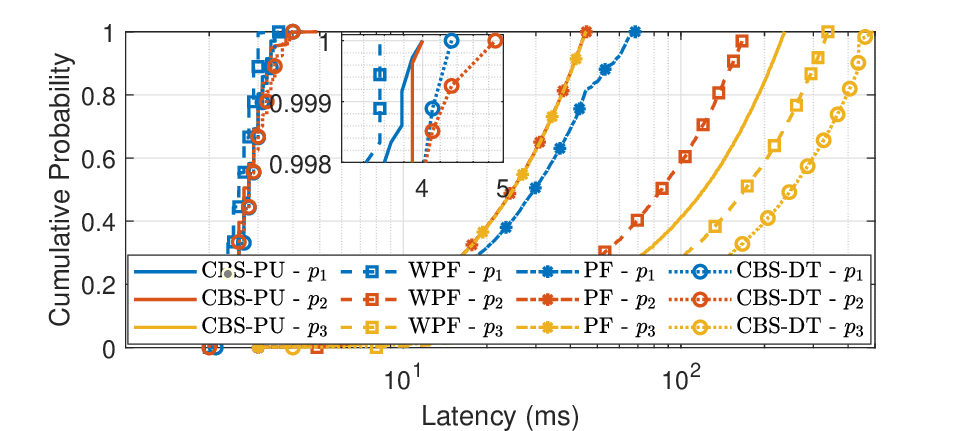} 
        \caption{Higher load ($\rho=4$)} 
        \label{fig:latency_cdf_rho100} 
    \end{subfigure}
    \begin{subfigure}[b]{0.5\columnwidth}
        \centering
        \includegraphics[scale=0.48]{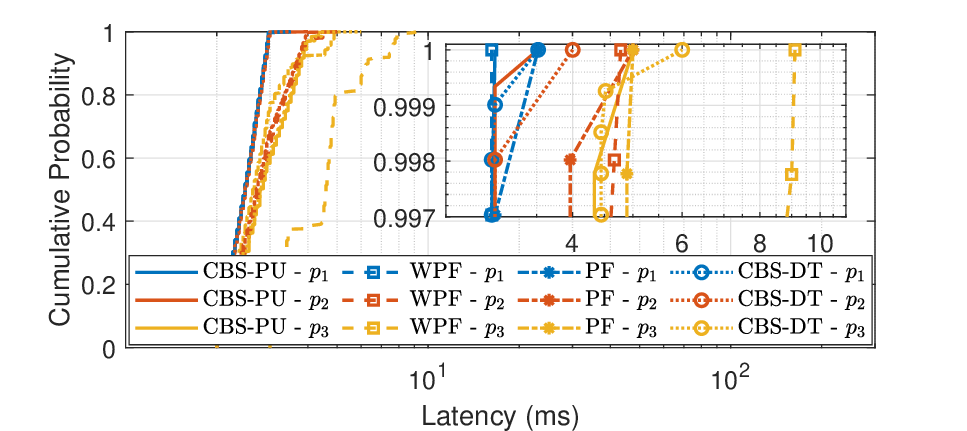} 
        \caption{Nominal load ($\rho=1.0$)} 
        \label{fig:latency_cdf_rho400}
    \end{subfigure}
    \caption{Empirical CDFs of downlink packet latency} 
    \label{fig:latency_cdf_combined} 
\end{figure}
\subsection{Latency and QoS Performance}\label{subsec:latency}
Fig.~\ref{fig:latency_cdf_combined} shows downlink latency cumulative distribution functions (CDFs) for four schedulers: CBS-DT and CBS-PU (each layered over an RR selector), PF, and WPF. We compare a nominal load ($\rho=1.0$) to a heavy load ($\rho=4.0$) chosen to stress priority enforcement. To accentuate scheduler behavior, high-priority UEs  operate under poorer channel conditions. 
\subsubsection{Heavy load ($\rho=4.0$)}
PF is class-agnostic: it schedules $i=\arg\max s_i[n]$ with $s_i[n]=r_i[n]/\bar R_i[n]$, where $r_i[n]$ is the instantaneous
rate estimate implied by the current MCS and RB budget, and $\bar R_i[n]$ is the moving average of past served rates. PF exhibits a distributional priority inversion: over a nontrivial range of $n$, $\mathrm{CDF}^{\mathrm{PF}}_{p_1}(n)<\mathrm{CDF}^{\mathrm{PF}}_{p_2}(n)$ and $\mathrm{CDF}^{\mathrm{PF}}_{p_1}(n)<\mathrm{CDF}^{\mathrm{PF}}_{p_3}(n)$. UEs with smaller $\bar R_i[n]$ or transiently larger $r_i[n]$ are favored, \emph{regardless of class}; in our setup this often benefits $p_2/p_3$, producing the observed inversion. Class weighting (proportional to idleSlope) shifts the $p_1$ curve left relative to PF, restoring order for $p_1$ while pushing $p_2$--and especially $p_3$--rightward, reflecting the higher long-term share assigned to $p_1$. Both CBS-PU and CBS-DT preserve class order and markedly tighten tails versus PF/WPF. For $p_2$, CBS-DT stochastically dominates PF and WPF. For $p_3$, CBS-DT shows the heaviest tail in the panel, whereas CBS-PU yields visibly better $p_3$ latency. The gap stems from \emph{partial-usage} debiting: CBS-PU charges delivered bytes rather than a full TB, yielding a smaller post-grant debit and a shorter re-eligibility gap (cf.\ Lemma~\ref{lem:absolute-wait}(C)); full-TB debits in CBS-DT drive deeper deficits and longer recovery.
\subsubsection{Nominal load ($\rho=1.0$)}
At $\rho=1.0$, queues are stable and $\bar R_i[n]$ tracks long-term rates closely, so PF’s catch-up effect is weak. PF preserves class order but retains heavier tails than CBS for lower-priority classes. WPF moves the $p_1$ CDF left of PF while pushing $p_3$ right, reflecting a heavier tail for the lowest-priority class and indicating starvation. Both CBS variants preserve class order and produce tighter tails; $p_2$ remains close to $p_1$, and $p_3$ is limited but not starved.
\subsubsection{Determinism and the role of RR} CBS provides slot-level determinism: time to regain eligibility after any deficit is bounded (Lemma~\ref{lem:absolute-wait}(A)), and, once eligible, RR over the eligible set with at most $K$ new grants per slot gives a deterministic access bound $\lceil E_{\max}/K\rceil T_{\mathrm{slot}}$ (Lemma~\ref{lem:absolute-wait}(B)). Combined with the re-eligibility bound (Lemma~\ref{lem:absolute-wait}(C)), this yields a per-cycle latency bound. These guarantees require a deterministic selector (RR) for the eligible set; if PF/WPF selects among eligible UEs, the $\lceil E_{\max}/K\rceil$ bound does not apply and no comparable worst-case delay guarantee is available.
\subsection{Resource Utilization Efficiency}
\label{subsec:utilization}
We quantify the efficiency of translating granted resources into actual data transmission via $\eta_u {=} (\sum B_{u}^{\text{actual}} / \sum \text{TBS}_u){\times}100\%$. Fig.~\ref{fig:utilization_low_load} presents $\eta_u$ evaluated at reduced load ($\rho=0.2$~Fig.~\ref{fig:rho20}), chosen specifically because partial grant fills ($B_{u}^{\text{actual}}[n] {< }\text{TBS}_u[n]$),  exposing utilization inefficiencies, that are more prevalent  at $\rho=0.2$ than at $\rho = 0.4$, where buffers are often saturated as shown in~Fig.\ref{fig:rho40}. CBS-PU minimizes \emph{credit waste}, maximizing efficiency ($>\!98\%$) by debiting only $B_{u}^{\text{actual}}[n]$,  avoiding penalizing queues for unused grant portions.
CBS-DT suffers efficiency loss from TBS debiting. Its  $\text{TBS}_u[n]$ debit creates \emph{credit waste} when $B_{u}^{\text{actual}}[n] < \text{TBS}_u[n]$. This not only reduces the $\eta_u$ ratio directly but also prolongs credit recovery, hence increasing the latency as observed in Section~\ref{subsec:latency}. The resulting efficiency ($\approx$95--98\%) reflects this penalty.
RR's  policy, despite serving any backlogged UE without shaping, results in frequent resource mismatches where $B_{u}^{\text{actual}}[n] \ll \text{TBS}_u[n]$, yielding the lowest efficiency ($\approx$80--85\% for $p3$).
\begin{figure}[t] 
    \begin{subfigure}[b]{0.48\columnwidth}
        \includegraphics[width=\linewidth]{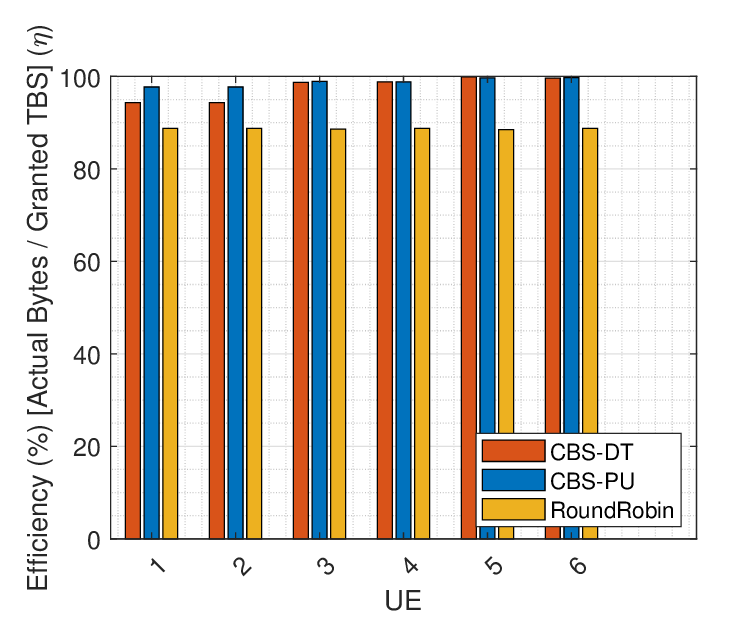} 
        \caption{at $\rho=0.2$} 
        \label{fig:rho20} 
    \end{subfigure}
    \begin{subfigure}[b]{0.48\columnwidth}
        \includegraphics[width=\linewidth]{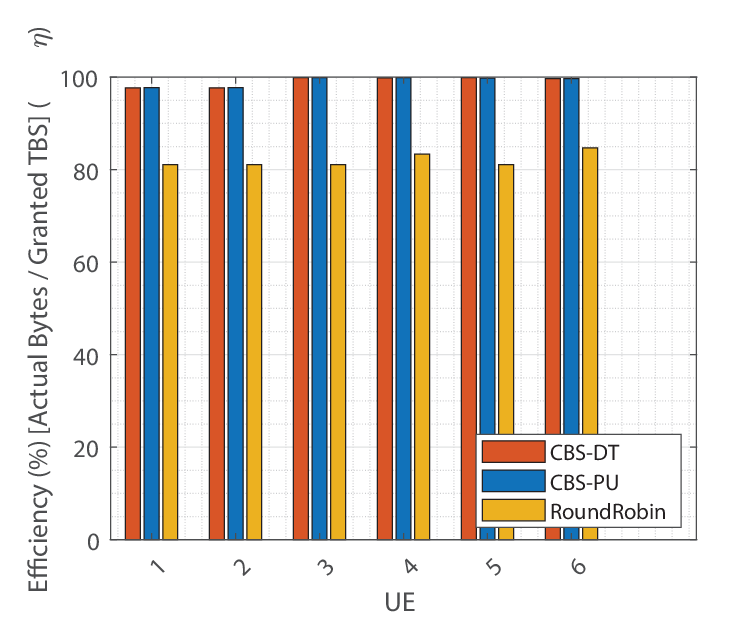} 
        \caption{at $\rho=0.4$} 
        \label{fig:rho40}
    \end{subfigure}
    \caption{Per-UE downlink resource utilization efficiency ($\eta_u$) comparison at reduced load.} 
    \label{fig:utilization_low_load}
\end{figure}
\subsection{Scalability of the Event-Driven Scheduler}
Fig.~\ref{fig:theoretical_scalability} compares a naive per-slot scan,
$T_{\text{slot}}^{\text{naive}} = c_0 + c_U\,U,$ with an event-driven design that updates state only on arrivals and grants, $T_{\text{slot}}^{\text{evt}} = c_0' + c_G\,\bar G + c_H\,(\bar A+\bar G)\log U,$
where $\bar A$ and $\bar G$ are the expected counts per slot of (i) queue-becomes-non-empty events (heap inserts) and (ii) new-grant events (heap pops and possible reinserts), respectively. Here $c_0, c_U, c_0', c_G, c_H > 0$ are implementation-dependent constants (per-slot overheads, per-UE scan cost, per-grant bookkeeping, heap-operation factors). 
This follows from the amortized bound in Section~\ref{sec:complexity},
$O(N{+}G) + O\!\big((A{+}G)\log U\big)$ over $N$ slots, after division by $N$.
The crossover population $U^\star$ solves
$c_0 + c_U U \;=\; c_0' + c_G \bar G + c_H (\bar A+\bar G)\log U,$
with a unique solution when $\bar A$ and $\bar G$ are independent of $U$ (e.g., a per-slot grant cap $K$ keeps $\bar G=O(1)$). In that regime, the event-driven cost grows only logarithmically in $U$. If, instead, $\bar A$ or $\bar G$ scale with $U$  then $T_{\text{slot}}^{\text{evt}}=\Theta\!\big((\bar A{+}\bar G)\log U\big)$ and can become $\Theta(U\log U)$.
\begin{figure}[t!]
  \centering
  \includegraphics[scale = 0.55]{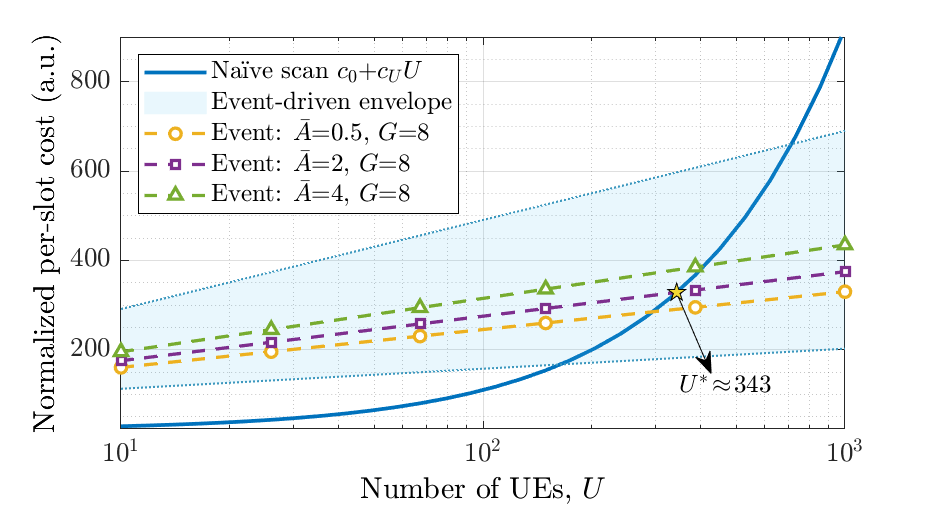}
\caption{Per-slot cost vs.\ UE count. Shaded band: \emph{model envelope} over $(\bar A,\bar G)\!\in\!\{0.5,2,4\}\!\times\!\{4,8,16\}$; not a confidence interval.} 
  \label{fig:theoretical_scalability}
\end{figure}
\section{Conclusion}
\label{sec_conclusion}
This paper addressed the mismatch between continuous-time Ethernet CBS and the slotted, grant-driven operation of 5G/6G networks by introducing a slot-native credit regulator inside the gNB scheduler. We developed two debit rules: CBS-DT, which debits the granted TBS, and CBS-PU, which debits only delivered bytes to avoid over-penalizing partially filled TBs. On the analytical side, we established bounded credit evolution and finite deficit recovery, and---when CBS is layered over an RR selector with at most $K$ new grants per slot---derived deterministic bounds on (i) time to regain eligibility, (ii) time to first service once eligible, and (iii) inter-grant separation. We further gave an event-driven realization that replaces per-slot $\mathcal{O}(U)$ scans with wake-ups, yielding amortized work $\mathcal{O}(N{+}G)+\mathcal{O}((A{+}G)\log U)$ over $N$ slots, with $U$ the UE count, $A$ the number of queue-activation events, and $G$ the number of new-grant events; thus, the cost scales with traffic events rather than population. Our 3GPP-conformant simulations (MATLAB 5G Toolbox) show that both CBS variants preserve strict priority ordering under overload and markedly tighten latency tails relative to PF/WPF. 
\section*{Acknowledgement}
This work was funded by the Dutch "Rijksdienst voor Ondernemend Nederland – RVO" under the TSH number 21007~–~RHIADA project. The authors would also like to acknowledge the project partners, GKN Fokker Aerospace and the Netherlands Aerospace Centre (NLR), for their support and collaboration. 
\bibliographystyle{IEEEtranUrldate}
\bibliography{references}
\end{document}